\newcommand{\beh}[1]{{\color{red}#1}}
\newtheorem{theorem}{Theorem}
\newtheorem{corollary}{Corollary}
\newtheorem{definition}{Definition}
\newtheorem{lemma}{Lemma}
\newtheorem{assumption}{Assumption}
\newtheorem{remark}{Remark}
\newtheorem{example}{Example}
\DeclareMathOperator{\Equaldef}{\overset{def}{=}}
\DeclareMathOperator{\E}{\EE}
\DeclareMathOperator{\R}{\mathbb{R}}
\def\MA{\mathcal{A}}
\def\bfa{\mathbf{a}}
\def\bfg{\mathbf{g}}
\def\bfq{\mathbf{q}}
\def\bfv{\mathbf{v}}
\def\bfu{\mathbf{u}}
\def\bfY{\mathbf{Y}}
\def\ones{\mathbf{1}}
\def\zeros{\mathbf{0}}
\def\bgamma{\boldsymbol{\gamma}}
\def\EE{\mathbb{E}}
\def\cs{c^{\star}}
\def\as{a^{\star}}
\def\xis{\tau_N^\star}
\title{\LARGE \bf
On the coordination efficiency of strategic multi-agent robotic teams}
\author{Marcos M. Vasconcelos$^{1}$ and Behrouz Touri$^{2}$
\thanks{$^{1}$M. M. Vasconcelos is with the Department of Electrical and Computer Engineering, FAMU-FSU College of Engineering,
        Florida State University, USA. Email:
        {\tt  m.vasconcelos@fsu.edu}}%
\thanks{$^{2}$Behrouz Touri is with the Department of Electrical Engineering, University of California San Diego, USA. Email:
        {\tt btouri@eng.ucsd.edu}}%
}
\begin{document}

\maketitle
\thispagestyle{empty}
\pagestyle{empty}

\begin{abstract}

We study the problem of achieving decentralized coordination by a group of strategic decision makers choosing to engage or not in a task in a stochastic setting. First, we define a class of symmetric utility games that encompass a broad class of coordination games, including the popular framework known as \textit{global games}. With the goal of studying the extent to which agents engaging in a stochastic coordination game indeed coordinate, we propose a new probabilistic measure of coordination efficiency. Then, we provide an universal information theoretic upper bound on the coordination efficiency as a function of the amount of noise in the observation channels. Finally, we revisit a large class of global games, and we illustrate that their Nash equilibrium policies may be less coordination efficient then certainty equivalent policies, despite of them providing better expected utility. This counter-intuitive result, establishes the existence of a nontrivial trade-offs between coordination efficiency and expected utility in coordination games.

\end{abstract}

\section{Introduction}

Coordinated behavior is desirable in many distributed autonomous systems such as robotic, social-economic, and biological networks \cite{Arditti:2021,Ramazi:2022,Paarporn:2021,Paarporn:2021b,Soham:2022}. Most of the Engineering literature on coordination assumes that the agents exchange  messages over a communication network to asymptotically agree on a common decision variable, such as in opinion dynamics and distributed optimization. However, in the field of Economics, the topic of coordination has been studied from a different point of view, where the agents do not exchange (explicit) messages but instead act strategically. Such an approach is related to coordination games, in which two or more interacting agents are incentivized to take the same action. Deterministic coordination games are characterized by the existence of multiple equilibria, and often lead to the analysis of social dilemmas. One way to address the multiplicity of equilibria uses a framework known as \textit{global games} \cite{Morris:2003}. 

A global game is a Bayesian coordination game, where each agent plays an action after observing a noisy signal about the state-of-the-world. The state-of-the-world, which we simply refer as \textit{state} captures features such as the strength of the economy in a bank run model, the political regime in a regime change model, or the difficulty of a task in a task-allocation problem. Under certain assumptions on the utility structure, global games admit a unique Bayesian Nash Equilibrium even in the presence of a vanishingly small noise in the observations, resolving the issue of equilibrium selection in games with multiple equilibria \cite{Carlsson:1993}.

\begin{figure}[t!]
    \centering
    \includegraphics[width=0.63\columnwidth]{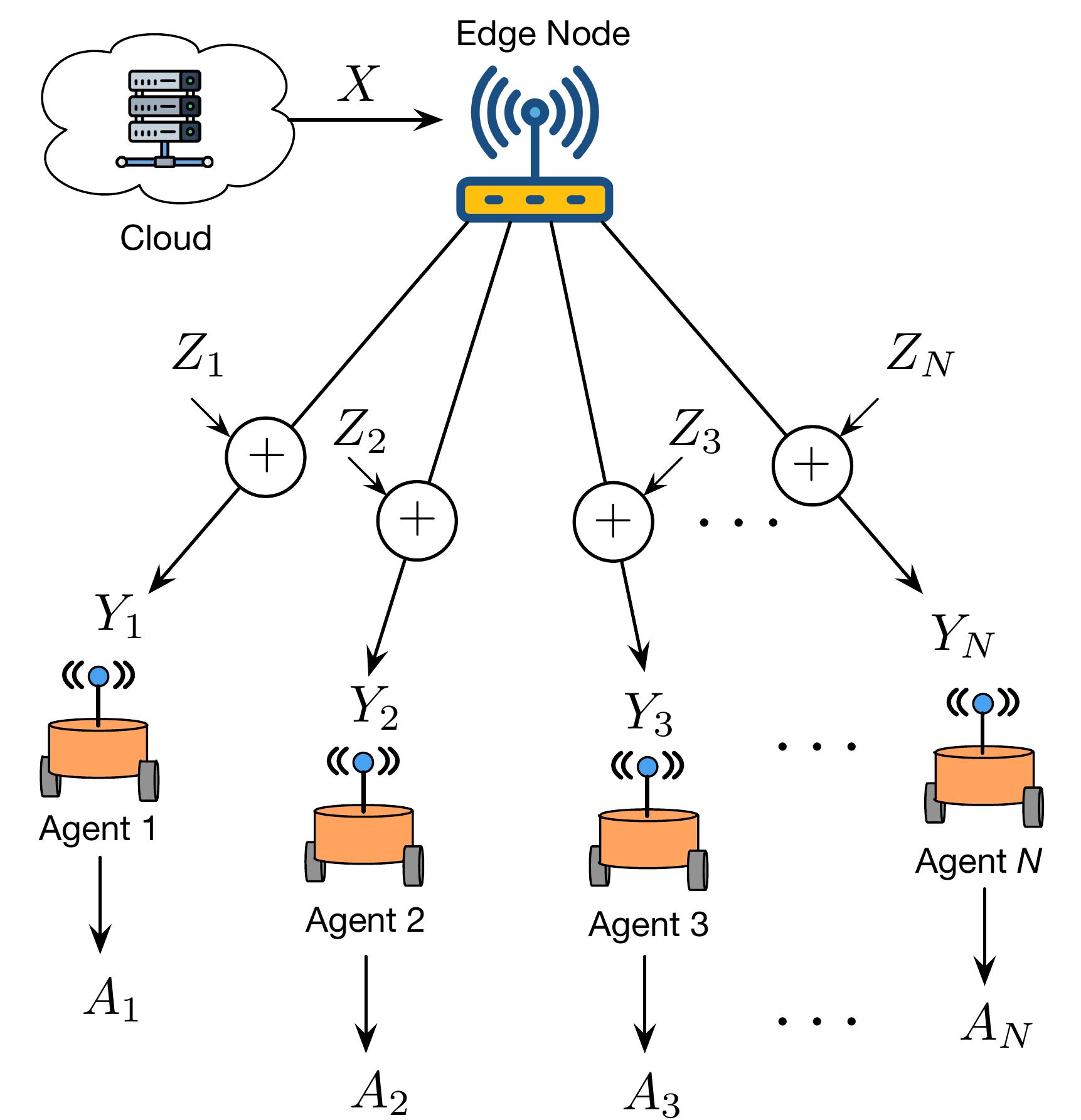}
    \caption{System architecture for the strategic coordination in a robotic team receiving information about a stochastic state variable available at a remote location.}
    \label{fig:system}
\end{figure}
The recent literature in this class of games focuses on aspects related to existence of Nash equilibria in the presence of different information patterns \cite{Dahleh:2016}, or the influence of correlation among the agent's observations \cite{touri2014global,Mahdavifar:2017}, and the impact of different local connectivity patterns in terms of externality in the agents' utility functions \cite{Leister:2022}. Other recent developments look at non-conventional probabilistic models for the noisy signals \cite{Vasconcelos:2022}, and the presence of a multi-dimensional state in a multi-task allocation problem \cite{Wei:2022}.

We consider the federated system architecture outlined in \cref{fig:system}, where the state is available at a remote location (e.g. a cloud server), and broadcast to multiple agents by an edge node or gateway over parallel noisy channels. Upon receiving its noisy signal, an agent makes a binary decision such as to maximize an expected utility function satisfying the strategic complementarity property, leading to coordinated behavior \cite{Hoffmann:2019}. We study how the coordination in a global game degrades with the level of noise in the communication channels. Moreover, we are interested in characterizing the limits of coordination for a given signal to noise ratio used for communication with the robotic agents.

The main contributions of this paper are:

\begin{itemize}
\item We introduce a class of games, namely homogeneous coordination games, that includes a broad class of global games. 

\item We introduce a novel notion of coordination efficiency used to measure the coordination for the homogeneous coordination games.

\item We obtain a fundamental limitation on the coordination efficiency in global games for any policy based on information theoretic tools.
\end{itemize}












\section{System Model}
In this section, we discuss our model for global games, and in particular, we introduce an important subclass of such games, i.e., \textit{homogeneous coordination} games. 

\subsection{Utility structure}

A global game  is an incomplete information game that is played between $N$ players $[N]\Equaldef\{1,\ldots,N\}$ and nature. Formally, a global game is a tuple $([N],\MA,X,\bfu,\bfY)$, where: 
\begin{enumerate}[(i)]
    \item $\MA=\MA_1\times\MA_2\times \cdots\times \MA_N$ is the joint action set of the $N$ players with $\MA_i$ being the action set for player $i\in [N]$, 
    \item  $X$ is a random variable determining the \textit{type} of nature. We refer to $X$ as the state or the underlying fundamental of the game,  
    \item $\bfu=(u_1,\ldots,u_N):\MA\times \R\to\R^{N}$ is the utility of the $N$ players with $u_i:\MA\times \R\to \R$ being the utility of the $i$-th player that depends on the action of each player and the value of the underlying fundamental $X$, and 
    \item $\bfY=(Y_1,\ldots,Y_N)$ where $Y_i$ is a random variable denoting the player $i\in [N]$ noisy observation (that forms the belief) of the underlying fundamental $X$.
\end{enumerate}
In this work, for any vector 
$\bfv=(v_1,\ldots,v_N)$ and any $i\in [N]$, we use the notation $\bfv_{-i}=(v_1,\ldots,v_{i-1},v_{i+1},\ldots,v_N)$ and with abuse of notation, we say $\bfv=(v_i,\bfv_{-i})$.  For example, for a joint action $\bfa\in \MA$, we write $\bfa=(a_i,\bfa_{-i})$ for all $i\in [N]$. 

Our work is motivated by the observation that a vast majority of studies in global games and their applications, the underlying games has the following common features:
\begin{enumerate}[(a)]
     \item \label{prop:sym}\textit{Symmetric/Permutation invariant}: In many settings, the utility functions of individual agents are invariant under any permutation of other agents' actions. In other words, $u_i(a_i,\bfa_{-i})=u_i(a_i,\bfa_{-i}P)$ for any $(N-1)\times (N-1)$ permutation matrix\footnote{A matrix is a permutation matrix if all its elements are zero or one and each row and each column has exactly one non-zero element.} $P$. 
     \item \textit{Homogeneous utility functions}: The utility function of the $N$ players are the same in the sense that for any player $i\in [N]$ and any action profile $\bfa\in \MA$, we have $u_i(a_i,\bfa_{-i})=u_1(a_i,\bfa_{-i})$.
      \item \textit{Homogeneous action sets}: In many global games, we are dealing with a large population, and the action set of all players are identical. For example, in the case of political riots, all players decide to take a risky action or safe action in the face of a political regime. In this case, $\MA_i=\{0,1\}$ where $0$ and $1$ correspond to the safe and risky actions, respectively. 
    \item \textit{Coordination promoting}: Again, in most settings of interest, the utility structure of the players is such that it promotes 
    coordination.
    For example, in the case of political uprisings, bank-runs, etc., a well-studied utility function is 
    $u_i(\bfa,X)=a_i(\sum_{j=1}^Na_i-X)$. Therefore, in the case, where all players have the perfect information about $X$, i.e., when $Y_i=X$ for all $i\in [N]$, depending on whether $X>\frac{1}{n}$ or $X<\frac{1}{n}$, the only equilibrium of the game is either $\bfa=\ones$ or $\bfa=\zeros$, resulting in coordination among the $N$ players.  
\end{enumerate}

 Motivated by this, we introduce the notion of \textit{homogeneous coordination} games that formalize a broad class of games satisfying the above properties. For this, let ${\Delta^{k}:=\{\bfq\in \R_+^{k}\mid \sum_{i=1}^{k}q_i=1\}}$ be the probability simplex in $\R^k$. For a finite set $\MA=\{0,\ldots,M-1\}$ and a vector $\bfv\in\MA^d$, where $d\geq 1$, let us define the empirical mass function $G(\bfv)=\big(q_0,\ldots,q_{M-1} \big)\in \Delta^{M}$ by 
\begin{equation}\nonumber
    q_{\ell}= \frac{1}{\gamma}\sum_{j} \mathbf{1}(v_j=\ell), \ \ \ell\in\MA.
\end{equation}
Basically, $q_\ell$ is the proportion of the entries of $\bfv$ that are equal to $\ell\in \MA$. Now, we are ready to formalize the class of homogeneous coordination games.  
\vspace{5pt}

\begin{definition}[Homogeneous Coordination Game] \label{def:coordination} {A homogeneous coordination game is a game where all the agents have the same action set $\MA$, the same utility function ${u:\MA^N\times\R\to\R}$, satisfying the following conditions: 
\begin{enumerate}[(1)]
    \item \label{prop:homogsymm} There exists a function $\hat{u}:\MA\times\Delta^{N-1}\times \R\to \R$ where for all $\bfa\in \MA^{N}$, all $i\in [N]$, and all $x\in \R$, we have
    \begin{align}\label{eqn:symmetry}
        \hat{u}(a_i,G(\bfa^{-i}),x)=u(a_i,\bfa^{-i},x).
    \end{align}
    \item \label{prop:homogcoord}For all $i\in [N]$, all $x \in \mathbb{R}$, and all $\bfg\in\Delta^{N-1}$, there exists an optimal action $a^\star(x)$ and majority $c^\star(x)$ of the players, such that if the majority are playing  $a^\star(x)$, then player $i$ is better off playing that action. Mathematically, there exists $c^{\star}(x)\in[0,1]$ such that for any $\bfg\in \Delta^{M}$ with $g_{a^{\star}(x)}=\max_{\ell\in [M]}g_\ell\geq c^{\star}(x)$, we have 
    \begin{equation}\label{eqn:coordination}
       \hat{u}\big(a_i^\star(x),\bfg,x\big) \geq \hat{u}\big(a_i,\bfg,x\big), \qquad \forall a_i\in\MA.
    \end{equation}
\end{enumerate}}
\end{definition}

\vspace{5pt}

Note that Property~\eqref{prop:homogsymm} essentially means that the utility function of each player is symmetric/permutation invariant.
In other words, for a finite action set $\MA=[M]$, any symmetric/permutation invariant function $f:\MA^{N-1}\to \R$ (as defined in \eqref{prop:sym}), can be written as a function of the empirical mass function of the $M$ actions, i.e., for such utility functions, it does not matter which player is playing what action, but rather how many or what proportion of the players is playing each action. 

For the rest of the paper, with an abuse of notation, instead of $u(a_i,\bfa_{-i},x)$, we may view the utility functions of a homogeneous coordination game to be simply a function of the empirical mass and use the notation $u(a_i,\bfg_{-i},x)$ instead of $\hat{u}(a_i,G(\bfa_{-i}),x)$, where $\bfg_{-i}=G(\bfa_{-i})$. 

\subsection{Information structure and policies}
Here, we discuss the assumptions on the fundamental $X$ and individual agents' noisy observation of $Y_i$. Throughout, we assume that $X$ is a zero-mean Gaussian random variable with variance $\sigma_X^2$, i.e., $X\sim\mathcal{N}(0,\sigma_X^2)$. We assume the commonly studied model (cf.~\cite{Morris:2003,touri2014global,Dahleh:2016}) for the $i$-th agent noisy observation $Y_i$ to be $Y_i = X +Z_i$.  We assume that the noise sequence $\{Z_i\}_{i\in [N]}$, is independent and identically distributed across agents and $Z_i \sim \mathcal{N}(0,\sigma^2_{Z})$. Moreover, $\{Z_i\}$ is independent of $X$.

Note that since $X$ and $Y_i$ are jointly Gaussian and the minimum mean squared error estimate of $X$ given $Y_i=y_i$ is linear and is given by
\begin{equation}\label{eq:mmse}
    \hat{x}_{\mathrm{mmse}}(y_i) \Equaldef \E[X\mid Y_i=y_i] = \Big(\frac{\sigma_X^2}{\sigma_X^2+\sigma_Z^2}\Big)y_i .
\end{equation}

In general for games of imperfect information (which includes global games and homogeneous coordination games), the agents take action based on their observation. This leads to the notion of \textit{policy}. For homogeneous coordination games with the action set $\MA=\{1,\ldots, M\}$, a \textit{policy} is a mapping $\gamma_i: \mathbb{R}\rightarrow \mathcal{A}$ that translates agent $i$-s observation to action, i.e., agent $i\in [N]$ takes action $a_i = \gamma_i(Y_i)$.

\subsection{Bayesian Nash Equilibrium}
Let $u_i$ be the utility function of a homogeneous coordination game (\cref{def:coordination}). The agents in this game act in a noncooperative manner, by seeking to maximize their individual \textit{expected} utility with respect to their coordination policies. Let $\boldsymbol{\gamma}\Equaldef (\gamma_1, \ldots, \gamma_N)$ be a \textit{policy profile}
, i.e., the collection of policies used by all the agents in the system. 

Given $\boldsymbol{\gamma}_{-i}$, the goal of the $i$-th agent is to solve the stochastic optimization problem
\begin{equation}\nonumber
    \underset{\gamma_i}{\mathrm{maximize}} \ \  \mathcal{J}_i(\gamma_i,\boldsymbol{\gamma}_{-i}) \Equaldef  \EE \Big[ u_i(A_i,G_{-i},X) \Big],
\end{equation}
where the expectation is taken over all the exogenous random variables $X$, and $\{Z_i\}_{i\in [N]}$. This leads to the notion of Bayesian Nash Equilibrium (BNE) strategies.
\begin{definition}[Bayesian Nash Equilibrium]
 A policy profile $\boldsymbol{\gamma}^\star$ is a Bayesian Nash Equilibrium if
\begin{equation}\nonumber
    \mathcal{J}_i(\gamma_i^\star,\gamma_{-i}^\star) \geq 
\mathcal{J}_i(\gamma_i,\boldsymbol{\gamma}_{-i}^\star), \ \mbox{for all } \gamma_i \in \Gamma, \ \ i \in [N],
\end{equation}
where $\Gamma$ is the space of all admissible coordination policies/strategies.
\end{definition}

\subsection{Coordination measure}

Given that the state $X$ is not perfectly observed by the agents, full coordination is often unachievable. In a deterministic setting, defining a precise notion of coordination and agreement is a well-posed problem. However, there are multiple ways of defining a metric of coordination efficiency in a stochastic game setting. One essential feature that such a metric should have is to capture that the extent to which agents coordinate around an optimal action degrades with respect to the amount of noise in the observations. The introduction of framework of homogeneous coordination games, allows us to mathematically define a measure of \textit{coordination efficiency}. 

\begin{definition}[Coordination efficiency]\label{def:coordeffic}
    Let $\bgamma$ be a policy profile of $N$ players in a homogeneous coordination game (as defined in~\cref{def:coordination}). We define the average coordination efficiency $\varrho: \boldsymbol{\gamma} \mapsto [0,1]$ as 
    \begin{equation}\nonumber
        \varrho(\boldsymbol{\gamma}) \Equaldef \frac{1}{N} \sum_{i=1}^N \mathbb{P} \big(\gamma_i(Y_i) = a^\star(X) \big),
    \end{equation}
    where $a^\star(x)$ is the optimal action defined in \cref{def:coordination}.
\end{definition}

\vspace{5pt}

\section{Global Games revisited}
An important instance of homogeneous coordination games is a class of binary action global games (i.e., $\MA=\{0,1\}$), where the utility of each agent is given by
\begin{equation}\label{eqn:gg}
    u_i(a_i,\bfa_{-i},x) = a_i\cdot\bigg(b\Big(\sum_{j\neq i}a_j\Big) - x\bigg),
\end{equation}
where $b:[0,N-1]\rightarrow \mathbb{R}$ is a continuous and increasing function. The function $b(\cdot)$ is called the \textit{benefit function}. One application for this utility is in distributed task allocation in robotic teams \cite{Kanakia:2016,Berman:2009}, where $x$ represents the difficulty of a task. An agent benefits from engaging in the task if the number of other agents engage in the same action is sufficiently large. However, if the variable $x$ is not perfectly observed by the agents it is not clear whether an agent should engage in the task or not. 

Our next result establishes that in fact global games with utility structure~\eqref{eqn:gg} are homogeneous coordination games. 
\begin{lemma}\label{lemma:benefitiscoordination}
    A global game with $N\geq 2$ players, binary action set $\MA=\{0,1\}$, and utility function~\eqref{eqn:gg} is a homogeneous coordination game for any increasing continuous function $b:[0,N-1]\to\R$. 
\end{lemma}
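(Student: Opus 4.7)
The plan is to verify the two defining properties of a homogeneous coordination game (Definition~1) directly from the utility \eqref{eqn:gg}. For the symmetry property \eqref{prop:homogsymm}, I would observe that \eqref{eqn:gg} depends on $\bfa_{-i}$ only through the count $\sum_{j\neq i} a_j$, and that if $G(\bfa_{-i}) = (g_0,g_1)$, then $\sum_{j\neq i} a_j = (N-1)g_1$. Thus setting
\begin{equation*}
\hat{u}\bigl(a_i,(g_0,g_1),x\bigr) = a_i\bigl(b((N-1)g_1) - x\bigr)
\end{equation*}
satisfies \eqref{eqn:symmetry} by construction and gives a utility that manifestly depends only on the empirical mass of opponents' actions.

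For the coordination property \eqref{prop:homogcoord}, my key observation is that for a binary action set the requirement $g_{a^\star(x)}=\max_\ell g_\ell$ already forces $g_{a^\star(x)}\geq 1/2$. I would therefore take $c^\star(x)=1/2$ and choose the optimal action by thresholding $x$ against the ``tipping-point benefit'' $b((N-1)/2)$:
\begin{equation*}
a^\star(x) = \begin{cases} 1 & \text{if } x \leq b\bigl((N-1)/2\bigr), \\ 0 & \text{otherwise.} \end{cases}
\end{equation*}

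The verification then reduces to the identity $\hat{u}(1,\bfg,x) - \hat{u}(0,\bfg,x) = b((N-1)g_1) - x$. If $a^\star(x)=1$, then $x\leq b((N-1)/2)$ and $g_1\geq 1/2$, so monotonicity of $b$ gives $b((N-1)g_1)\geq b((N-1)/2)\geq x$, hence playing $1$ is optimal. The symmetric argument handles $a^\star(x)=0$, where $g_1\leq 1/2$ and $x>b((N-1)/2)$, so $b((N-1)g_1)\leq b((N-1)/2)<x$, making $0$ optimal.

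The only subtle step is identifying the correct threshold $b((N-1)/2)$: combining $c^\star(x)=1/2$ with the $\max$-requirement restricts $g_1$ to one of the half-intervals $[1/2,1]$ or $[0,1/2]$, and the worst case for the comparison of $b((N-1)g_1)$ with $x$ sits precisely at $g_1=1/2$. This is what forces the critical value to be $b((N-1)/2)$, and I do not anticipate any serious obstacle beyond spotting this choice.
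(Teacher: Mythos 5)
Your proof is correct, and its symmetry half coincides exactly with the paper's (same choice of $\hat{u}(a_i,\bfg,x)=a_i(b((N-1)g_1)-x)$), but the coordination half takes a genuinely different route. The paper argues by cases on $x$: for $x\le b(0)$ it takes $\as(x)=1$ with $\cs(x)=0$, for $x\ge b(N-1)$ it takes $\as(x)=0$ with $\cs(x)=0$, and for $x\in(b(0),b(N-1))$ it shows that \emph{both} actions can serve as $\as(x)$, using $x$-dependent thresholds such as $\cs(x)=\min\{q\in[0,1]\mid b(q(N-1))\ge x\}$ — and it is precisely the existence of this minimum where the continuity hypothesis on $b$ is invoked. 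You instead exploit the binary action set: since $g_0+g_1=1$ forces $\max_\ell g_\ell\ge 1/2$, the uniform choice $\cs(x)=1/2$ lets the majority condition do all the work, and the single rule $\as(x)=\mathbf{1}\bigl(x\le b((N-1)/2)\bigr)$ then verifies \eqref{eqn:coordination} using only monotonicity of $b$. Your argument is shorter, does not use continuity at all (so it proves slightly more than the lemma states), and yields a canonical selection of $\as(x)$ — a single threshold on $x$ — which is convenient downstream since the coordination-efficiency metric of Definition~2 needs a well-defined $\as(X)$. What it gives up is the paper's explicit observation that for intermediate $x$ the coordinating action is non-unique, which reflects the multiple equilibria of the underlying complete-information coordination game. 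One cosmetic omission: you never mention the (trivial) homogeneity of action sets and utility functions required by Definition~1; the paper dispatches this in one sentence, and you should too.
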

\begin{proof}
    The homogeneity of the action sets and utility functions follow readily from the definition of such games. To show Property~\eqref{prop:homogsymm}, for $\bfa\in \MA^N$ and $i\in[N]$, let $p=\frac{\sum_{j\neq i}a_j}{N-1}$. Then, $G(\bfa_{-i})=(1-p,p)$. 
    Therefore, letting $\hat{u}(a_i,\bfg,x)\Equaldef a_i(b((N-1)g_1)-x)$ for all $a_i\in\MA$, $\bfg=(g_0,g_1)\in \Delta^2$, and $x\in \R$, we have
    \begin{align}\label{eqn:hgreformulation}
        u_i(a_i,\bfa_{-i},x)&=a_i\Big(b\left((N-1)p\right)-x\Big)=\hat{u}(a_i,G(\bfa_{-i}),x).
    \end{align}
    To show Property~\eqref{prop:homogcoord}, fix $x\in \R$ and let  $b(\cdot)$ be an increasing benefit function. Then, if $x\leq b(0)$, we have 
    \[\hat{u}(1,\bfg,x)= b\Big((N-1)g_1\Big) - x\geq b(0)-x\geq \hat{u}_i(0,\bfg,x)=0.\]
    Therefore, for any $x\leq b(0)$,~\eqref{eqn:coordination} holds with ${\as(x)=1}$ and $\cs(x)=0$. Similarly, it can be shown that for ${x\geq b(N-1)}$, \eqref{eqn:coordination} holds with $\as(x)=0$ and $\cs(x)=0$. For $x\in (b(0),b(N-1))$, we can show that both $\as(x)=0$ and $\as(x)=1$ are possible coordinating actions. To show $\as(x)=1$, let $\cs(x)=\min\{q\in[0,1]\mid b(q(N-1))\geq x\}$ (note that the minimum exists due to the continuity of $b(\cdot)$ and compactness of $[0,1]$). Then for any probability vector $\bfg=(g_0,g_1)\in \Delta^{2}$ with $g_1\geq \cs(x)=q$, we have 
    \begin{align*}
        \hat{u}(1,\bfg,x)&= b\Big((N-1)g_1\Big) - x\cr 
        &\geq b((N-1)q)-x\geq \hat{u}(0,\bfg,x)=0.
    \end{align*}
    Therefore, condition~\eqref{eqn:coordination} holds. Similarly, it can be shown that for $x\in (b(0),b(N-1))$, $\as(x)=0$ is a coordinating action with $\cs(x)=\max\{q\in[0,1]\mid b((1-q)(N-1))\geq x\}$.
\end{proof}









In this work, we study the so-called \textit{best-response} policy for the above games. In our case, for a joint policy $\bgamma$, agent $i$-s  best-response policy is 
\begin{equation}\nonumber
  \mathrm{BR}(y_i,\bgamma_{-i})=
  \begin{cases}1&
    \mbox{if }\begin{aligned}[t]
       &\mathbb{E}[ b\Big( \sum_{i\neq j} \gamma(Y_j) \Big)\mid  Y_i=y_i] \\
       &\qquad\qquad{\geq} \E[X\mid Y_i=y_i]
       \end{aligned}\\
    0&\text{otherwise}
  \end{cases}.
\end{equation}

\subsection{Threshold policies and their best-response}

In many games of imperfect information, including global games, we are interested in the class of threshold policies. For global games with binary actions, these are policies where an agent compares its observed signal ${Y_i=y_i}$ to a threshold $\tau_i$ and decides whether to take the risky action ($a_i=1$) or not ($a_i=0$), i.e., 
\begin{equation}\nonumber
    \gamma_i(y_i) = \begin{cases}
    1, \ \ y_i \leq \tau_i\\
    0, \ \ \text{otherwise.}
    \end{cases}
\end{equation}

Using the next result, we will show that the best response to homogeneous threshold policies is a threshold policy.

\begin{lemma}\label{lem:monotonicity}
If the function $b(\cdot)$ is nonnegative and strictly increasing, and all other agents $j\not=i$ utilize a threshold policy $\gamma$ with the same threshold $\tau$, then
\begin{equation}\nonumber
    \EE\Bigg[ b\Big( \sum_{i\neq j}  \gamma(Y_j) \Big) \ \bigg| \ Y_i=y_i\Bigg]
\end{equation}
is a strictly decreasing function of $y_i$.
\end{lemma}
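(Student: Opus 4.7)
The plan is to exploit the conditional independence structure induced by $X$ and reduce the claim to two separate monotonicity statements: one at the level of the conditional distribution of the action-sum given $X$, and one at the level of the posterior of $X$ given $Y_i$.

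First, I would apply the tower property by conditioning on $X$. Since the noises $\{Z_j\}_{j\in [N]}$ are i.i.d.\ and independent of $X$, the observations $\{Y_j\}_{j\neq i}$ are conditionally independent of $Y_i$ given $X$. Letting $S\Equaldef \sum_{j\neq i}\gamma(Y_j) = \sum_{j\neq i} \mathbf{1}(Z_j\leq \tau - X)$, conditionally on $X=x$ the random variable $S$ is $\mathrm{Binomial}(N-1,p(x))$ with $p(x)=\Phi\bigl((\tau-x)/\sigma_Z\bigr)$, which is strictly decreasing in $x$ and lies in $(0,1)$ for every $x\in\R$. Defining
\[
h(x)\Equaldef \EE[b(S)\mid X=x] = \sum_{k=0}^{N-1}\binom{N-1}{k} p(x)^{k}(1-p(x))^{N-1-k}\,b(k),
\]
conditional independence yields $\EE[b(S)\mid Y_i=y_i]=\EE[h(X)\mid Y_i=y_i]$.

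Second, I would show that $h$ is strictly decreasing on $\R$. Since the Binomial family is stochastically increasing in its success probability and $b$ is strictly increasing on $\{0,1,\ldots,N-1\}$, the map $q\mapsto \EE[b(\mathrm{Bin}(N-1,q))]$ is strictly increasing on $(0,1)$; a clean one-line verification is the identity
\[
\tfrac{d}{dq}\EE[b(\mathrm{Bin}(n,q))] = n\,\EE\bigl[b(\mathrm{Bin}(n-1,q)+1)-b(\mathrm{Bin}(n-1,q))\bigr]>0.
\]
Composing with the strictly decreasing map $x\mapsto p(x)$ yields the strict monotonicity of $h$.

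Third, I would pull back this monotonicity to $y_i$ via the Gaussian posterior. Because $(X,Y_i)$ is jointly Gaussian, $X\mid Y_i=y_i \sim \mathcal{N}(\alpha y_i,\sigma_0^{2})$ with $\alpha=\sigma_X^{2}/(\sigma_X^{2}+\sigma_Z^{2})>0$ and $\sigma_0^{2}$ independent of $y_i$ (consistent with~\eqref{eq:mmse}). Coupling across $y_i$ values by writing $X=\alpha y_i + W$ with $W\sim \mathcal{N}(0,\sigma_0^{2})$ gives $\EE[h(X)\mid Y_i=y_i]=\EE[h(\alpha y_i+W)]$; for $y_i<y_i'$, the strict decrease of $h$ together with $\alpha>0$ yields $h(\alpha y_i+W)>h(\alpha y_i'+W)$ almost surely, and taking expectations (finite because $b$ is bounded on the finite range of $S$) delivers the desired strict inequality. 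No step presents a real obstacle; the only subtlety worth flagging is the passage from weak stochastic dominance to \emph{strict} monotonicity of $h$, which is ensured by $p(x)\in (0,1)$ for every $x\in \R$ together with the strict monotonicity of $b$ on the integer range of $S$. The nonnegativity hypothesis on $b$ is not used in the monotonicity argument itself.
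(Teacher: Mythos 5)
Your proof is correct, and its skeleton is the same as the paper's: condition on $X$, observe that $\sum_{j\neq i}\gamma(Y_j)\mid X=x\sim\mathcal{B}\bigl(N-1,\Phi((\tau-x)/\sigma_Z)\bigr)$, invoke stochastic monotonicity of the binomial family in its success probability, and push the resulting monotonicity in $x$ through the Gaussian posterior $X\mid Y_i=y_i$, whose mean $\alpha y_i$ is increasing in $y_i$ (the paper's auxiliary variable $V$ is exactly your $W$). Where you genuinely diverge is in how the expectation is handled: the paper first passes to the tail-integral (layer-cake) representation $\mathbb{E}[B_{-i}\mid Y_i=y_i]=\int_0^\infty\bigl(1-F_{B_{-i}\mid Y_i=y_i}(\xi)\bigr)\,d\xi$ --- which is where the hypothesis $b(\cdot)\geq 0$ and the inverse $b^{-1}$ enter --- and then shows the conditional CDF is monotone in $y_i$ pointwise in $\xi$, whereas you write the conditional expectation directly as the finite binomial sum $h(x)=\sum_k\binom{N-1}{k}p(x)^k(1-p(x))^{N-1-k}b(k)$ and differentiate that in $q$. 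Your route buys two things: it dispenses with the nonnegativity assumption on $b$ (your closing remark that this hypothesis is not needed is accurate, since $b$ is automatically bounded on the finite range of the action sum), and it sidesteps the paper's somewhat delicate interchange of $\partial/\partial y_i$ with the integral over $\xi$, replacing it with a pointwise comparison under the coupling $X=\alpha y_i+W$ followed by a single application of monotone expectation. The one subtlety you flag --- strictness --- is handled correctly via $p(x)\in(0,1)$ and the positive forward differences $b(k+1)-b(k)>0$.
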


\vspace{5pt}

\begin{proof}
Let $B_{-i}$ denote a function of random variables $\{Y_j\}_{j\not=i}$ given by
$B_{-i} \Equaldef b\Big( \sum_{j\neq i} \gamma(Y_j) \Big)$, with the Cumulative Distribution Function (CDF)
\begin{equation}\nonumber
    F_{B_{-i}\mid Y_i=y_i}(\xi) \Equaldef  \mathbb{P}\Big( B_{-i} \leq \xi \ \Big| \ Y_i = y_i\Big).
\end{equation}

Since  $b(\cdot)\geq 0$, $B_{-i}$ is a nonnegative random variable. Therefore (cf.~\cite[Chapter 1.5, Property E.6]{hajek}),
\begin{equation}\nonumber
    \EE[B_{-i} \mid Y_i=y_i] = \int_{0}^\infty \Big( 1- F_{B_{-i}\mid Y_i=y_i}(\xi)  \Big)d\xi.
\end{equation}

Because the function $b$ is strictly increasing, it admits a unique inverse function $b^{-1}:\mathbb{R}_+\rightarrow \mathbb{R}_+$. Also, since the observations $\{Y_j\}_{j\in [N]}$ are conditionally (on $X$) independent, therefore, 
\begin{multline}\nonumber
    F_{B_{-i}\mid Y_i=y_i}(\xi) = \int_{\mathbb{R}} \mathbb{P} \Big(\sum_{i\neq j}  \gamma(Y_j) \leq b^{-1}(\xi) \ \Big| \ X=x\Big)\\
\times f_{X\mid Y_i=y_i}(x) dx.
\end{multline}

Let $A_j=\gamma(Y_j)$ for $j\not=i$. Conditioned on $X=x$, the collection of Bernoulli random variables $\{A_j \}_{j\neq i}$ is mutually independent with
\begin{equation}\nonumber
\mathbb{P}\big(A_j =1 \mid X=x\big) = \Phi \Big(\frac{\tau_j - x}{\sigma_Z} \Big),
\end{equation}
where $\Phi$ is the CDF of a standard Gaussian random variable\footnote{The CDF of a standard Gaussian random variable is given by
\begin{equation}\nonumber
\Phi(x)\Equaldef\int_{-\infty}^x \frac{1}{\sqrt{2\pi}}\exp\Big(-\frac{\xi^2}{2}\Big)d\xi.
\end{equation}
}. 

Under the assumption of an homogeneous threshold strategy profile where $\tau_j = \tau$ for all $j\neq i$, $\{A_j \}_{j\neq i}$ is identically distributed, which implies that
\begin{equation}\nonumber
\sum_{j\neq i} A_j \mid X=x \sim \mathcal{B} \Bigg(N-1,\Phi \Big(\frac{\tau - x}{\sigma_Z} \Big)\Bigg), 
\end{equation}
where $\mathcal{B}(k,p)$ is a binomial distribution with parameters $(k,p)$. Therefore,
\begin{multline}\nonumber
    F_{B_{-i}\mid Y_i=y_i}(\xi) = \EE_V \Bigg[ \sum_{\ell = 0}^{\lfloor b^{-1}(\xi)\rfloor} \binom{N-1}{\ell}\Phi \Big(\frac{\tau - V - \alpha y_i}{\sigma_Z} \Big)^{\ell}\\ \times  \Bigg(1-\Phi \Big(\frac{\tau - V - \alpha y_i}{\sigma_Z} \Big)\Bigg)^{N-1-\ell} \Bigg],
\end{multline}
where the expectation is with respect to a random variable $V$ with
\begin{equation}\label{eq:auxiliary}
    V \sim \mathcal{N}\Big(0, \frac{\sigma_X^2\sigma_Z^2}{\sigma_X^2+\sigma_Z^2} \Big) \ \ \text{and} \ \ \alpha\Equaldef \frac{\sigma_X^2}{\sigma_X^2+\sigma_Z^2}.
\end{equation}

Let $p(\tau,v,y_i) \Equaldef \Phi \Big(\frac{\tau - v - \alpha y_i}{\sigma_Z} \Big)$. 
Note that $p(\cdot,\cdot,y_i)$ is strictly decreasing in $y_i$, and since the CDF of a binomial random variable computed at a point is a strictly decreasing function 
in the probability parameter $p$, we have
\begin{equation}\nonumber
\frac{\partial}{\partial y_i}\Big(1-F_{B_{-i}\mid Y_i=y_i}(\xi) \Big) < 0.
\end{equation}

Therefore, 
\begin{equation}\nonumber
\frac{\partial}{\partial y_i} \int_{0}^\infty \Big( 1- F_{B_{-i}\mid Y_i=y_i}(\xi)  \Big)d\xi < 0.
\end{equation}
\end{proof}

\vspace{5pt}

\begin{theorem}\label{thm:threshold}
If the benefit function $b:\R_+\to\R_+$ is nonnegative and strictly increasing, the best-response map to a homogeneous threshold strategy profile is a threshold strategy.
\end{theorem}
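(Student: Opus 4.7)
The plan is to reformulate the best-response prescription as the sign of one scalar function and then show that this function is continuous, strictly decreasing, and has limits $\pm\infty$. Concretely, define
\[
h(y_i) \Equaldef \EE\Bigl[b\Bigl(\textstyle\sum_{j\neq i}\gamma(Y_j)\Bigr)\,\Big|\,Y_i=y_i\Bigr] - \EE[X\mid Y_i=y_i].
\]
From the explicit best-response formula, $\mathrm{BR}(y_i,\bgamma_{-i})=1$ if and only if $h(y_i)\geq 0$. By \cref{lem:monotonicity} the first term of $h$ is strictly decreasing in $y_i$, and by \eqref{eq:mmse} the second term equals $\alpha y_i$ with $\alpha=\sigma_X^2/(\sigma_X^2+\sigma_Z^2)\in (0,1)$, which is strictly increasing. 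Hence $h$ is strictly decreasing on $\R$.

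Next I would verify the limits and continuity of $h$. Since $\sum_{j\neq i}\gamma(Y_j)\in\{0,\ldots,N-1\}$ and $b$ is nonnegative and increasing, the first term of $h$ lies in $[0,b(N-1)]$, while $\alpha y_i\to\pm\infty$ as $y_i\to\pm\infty$; thus $h(y_i)\to+\infty$ as $y_i\to-\infty$ and $h(y_i)\to-\infty$ as $y_i\to+\infty$. Continuity of $h$ follows from the linearity of the MMSE term and from the bounded representation of the conditional benefit already derived in the proof of \cref{lem:monotonicity} (through the continuous integrand $p(\tau,V,y_i)=\Phi((\tau-V-\alpha y_i)/\sigma_Z)$), via dominated convergence with dominating constant $b(N-1)$. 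Combining strict monotonicity, continuity, and the two infinite limits with the intermediate value theorem yields a unique $\tau_i^\star\in\R$ with $h(\tau_i^\star)=0$, and the best-response is then $\gamma_i(y_i)=\mathbf{1}(y_i\leq\tau_i^\star)$, which is a threshold policy.

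The genuinely non-trivial step has already been handled inside \cref{lem:monotonicity}; what remains above is routine real analysis. The only potentially delicate point is the tie-breaking at $y_i=\tau_i^\star$: because $h$ is strictly monotone, $\{h=0\}$ is a singleton, so the convention ``$\mathrm{BR}=1$ when equality holds'' and the threshold convention ``$\gamma_i(y_i)=1$ on $y_i\leq\tau_i^\star$'' agree on this null event, and the resulting threshold policy coincides pointwise with the best-response.
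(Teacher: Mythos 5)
Your proposal is correct and follows essentially the same route as the paper's proof: both define the scalar function $h(y_i)=\EE[b(\sum_{j\neq i}\gamma(Y_j))\mid Y_i=y_i]-\EE[X\mid Y_i=y_i]$, invoke \cref{lem:monotonicity} together with the linearity of the MMSE estimate to get strict monotonicity, and use the boundedness of the conditional benefit term to obtain the limits $\pm\infty$ and hence a single crossing point. Your added remarks on continuity, dominated convergence, and tie-breaking at the threshold are harmless elaborations of details the paper leaves implicit.
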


\vspace{5pt}

\begin{proof}
Let \[f(y_i)\Equaldef \mathbb{E}[b(\sum_{j\not=i}\gamma(Y_j))\mid Y_i=y_i]-\mathbb{E}[X\mid Y_i=y_i].\] 
\cref{lem:monotonicity} implies that $\mathbb{E}[b(\sum_{j\not=i}\gamma(Y_j))\mid Y_i=y_i]$ is monotonically decreasing in $y_i$ while 
\[\mathbb{E}[X\mid Y_i=y_i]=\Big(\frac{\sigma_X^2}{\sigma_X^2+\sigma_Z^2}\Big)y_i \] is a strictly increasing function of $y_i$. Therefore, $f(y_i)$ is strictly decreasing. Also, since $b$ is an increasing function, $\mathbb{E}[b(\sum_{j\not=i}\gamma(Y_j))\mid Y_i=y_i]\in [b(0),b(N-1)]$ and hence, $\lim_{y_i\to-\infty}f(y_i)=\infty$ and $\lim_{y_i\to \infty}f(y_i)=-\infty$. Therefore, there exists a single crossing point $\bar{\tau}$ such that $f(y_i)>0$ for $y_i<\bar{\tau}$ and $f(y_i)<0$ for $y_i>\bar{\tau}$.
\end{proof}



\subsection{Linear benefit functions}

\Cref{thm:threshold} guarantees that the best response to homogeneous thresholds is a threshold policy for a broad class of Global Games. Once a new threshold is found, other agents imitate by using the same best response threshold. We  recursively use this scheme, which may converge to a BNE policy. 
However, the convergence of such a scheme for an arbitrary benefit function $b(\cdot)$  might not be easy establish, in general. In addition, finding the optimal strategy $\as(x)$ may not be feasible for general benefit functions. However, such characterization is possible for the class of \textit{linear} benefit functions.

Consider the following linear benefit function indexed by $N$, $b_N^{\mathsf{lin}}:\mathbb{R}\rightarrow \mathbb{R}$ such that
\begin{equation}\label{eq:linear}
b_N^{\mathsf{lin}}(\xi) \Equaldef \lambda\cdot\left(\frac{\xi}{N}\right),
\end{equation}
where $\lambda>0$. Define the \textit{belief} function $\pi_{ij}:\mathbb{R}^2 \rightarrow \mathbb{R}$ as
\begin{equation}\nonumber
\pi_{ij}(\tau_j,y_i) \Equaldef \mathbb{P}\big(Y_j \leq \tau \mid Y_i = y_i\big). 
\end{equation}



After a few algebraic manipulations, we can write 
\begin{equation}\label{eqn:linear}
\pi_{ij}(\tau_j,y_i) = \E \left[\Phi\left(\frac{\tau_j - V -\alpha y_i}{\sigma_Z} \right) \right],
\end{equation}
where $V$ and $\alpha$ are defined in \cref{eq:auxiliary}.

\vspace{5pt}
\begin{corollary}[Corollary to Theorem 1]\label{lem:BR_linear}
Assuming a linear benefit function (given by~\eqref{eq:linear}), if each agent $j\not=i$ uses a threshold policy with threshold $\tau_j$, then the best-response to any threshold strategy profile is given by the unique solution $\bar{y}_i$ of $ \frac{\lambda}{N}\sum_{i\neq j}\pi_{ij}(\bar{y}_i) = \alpha\bar{y}_i$.
\end{corollary}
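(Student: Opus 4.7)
The plan is to derive the corollary as a direct specialization of \cref{thm:threshold} to the linear benefit $b_N^{\mathsf{lin}}(\xi)=\lambda\xi/N$. Since \cref{thm:threshold} already establishes that the best response to a (homogeneous) threshold profile is itself a threshold policy, the only remaining task is to identify the explicit equation that pins down the best-response threshold $\bar{y}_i$.

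First, I would write the best-response inequality explicitly for the linear case. By linearity of expectation,
\begin{equation*}
\mathbb{E}\!\left[b_N^{\mathsf{lin}}\!\Big(\sum_{j\neq i}\gamma(Y_j)\Big)\,\Big|\,Y_i=y_i\right] = \frac{\lambda}{N}\sum_{j\neq i}\mathbb{E}[\gamma(Y_j)\mid Y_i=y_i].
\end{equation*}
Because each $\gamma$ is a threshold policy with threshold $\tau_j$, namely $\gamma(Y_j)=\mathbf{1}(Y_j\leq\tau_j)$, each summand equals $\mathbb{P}(Y_j\leq\tau_j\mid Y_i=y_i)=\pi_{ij}(\tau_j,y_i)$, matching the form in \eqref{eqn:linear}. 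Combined with $\mathbb{E}[X\mid Y_i=y_i]=\alpha y_i$ from \eqref{eq:mmse}, the best-response map becomes $\mathrm{BR}(y_i,\bgamma_{-i})=1$ iff $\frac{\lambda}{N}\sum_{j\neq i}\pi_{ij}(\tau_j,y_i)\geq \alpha y_i$.

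Next, I would argue existence and uniqueness of the crossing point. Define
\begin{equation*}
h(y_i)\Equaldef \frac{\lambda}{N}\sum_{j\neq i}\pi_{ij}(\tau_j,y_i)-\alpha y_i.
\end{equation*}
The first term is a decreasing function of $y_i$ by exactly the argument in \cref{lem:monotonicity} (applied to the special case $b=b_N^{\mathsf{lin}}$, or directly: $\pi_{ij}(\tau_j,y_i)=\mathbb{E}[\Phi((\tau_j-V-\alpha y_i)/\sigma_Z)]$ is decreasing in $y_i$ because $\Phi$ is increasing and its argument is decreasing in $y_i$). The second term is strictly increasing. Hence $h$ is strictly decreasing. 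Moreover, the first term is bounded in $[0,\lambda(N-1)/N]$, while $\alpha y_i$ diverges to $\pm\infty$; so $h(y_i)\to +\infty$ as $y_i\to-\infty$ and $h(y_i)\to -\infty$ as $y_i\to +\infty$. By continuity of $\Phi$ and dominated convergence, $h$ is continuous, and the intermediate value theorem therefore gives a unique zero $\bar{y}_i$ of $h$.

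Finally, I would combine the pieces: \cref{thm:threshold} guarantees the best response is a threshold policy, and the threshold must be precisely the crossing point at which the best-response inequality flips, that is, $\bar{y}_i$. This completes the identification of the best-response threshold as the unique solution of $\frac{\lambda}{N}\sum_{j\neq i}\pi_{ij}(\bar{y}_i)=\alpha\bar{y}_i$. There is no real obstacle here beyond bookkeeping; the only subtle point is that \cref{lem:monotonicity} is stated for a common threshold $\tau$, but its monotonicity argument carries over verbatim to heterogeneous thresholds $\{\tau_j\}$ because conditional independence and the monotonicity of $\Phi$ are all that were used.
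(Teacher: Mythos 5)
Your proposal is correct, and it is essentially the argument the paper intends: the paper states this result as an unproved corollary of \cref{thm:threshold}, leaving exactly the bookkeeping you carry out (linearity of expectation turning the benefit term into $\tfrac{\lambda}{N}\sum_{j\neq i}\pi_{ij}(\tau_j,y_i)$, the MMSE formula \eqref{eq:mmse} for the right-hand side, and a monotonicity-plus-intermediate-value argument for the unique crossing point). One point in your write-up deserves more credit than you give it: the remark you relegate to the end --- that \cref{thm:threshold} and \cref{lem:monotonicity} are stated only for a \emph{homogeneous} threshold profile, while the corollary asserts the result for \emph{any} threshold profile with heterogeneous $\tau_j$ --- is not merely a subtlety but the actual gap between the theorem and the corollary. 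Your direct argument, that each summand $\pi_{ij}(\tau_j,y_i)=\EE[\Phi((\tau_j-V-\alpha y_i)/\sigma_Z)]$ is individually decreasing in $y_i$ so the sum is decreasing without any appeal to the binomial/CDF machinery of \cref{lem:monotonicity}, is what genuinely closes that gap, and it is cleaner than re-running the lemma's proof: in the linear case one never needs the distribution of $\sum_{j\neq i}A_j$, only its conditional mean. So your proof is both correct and, on this one point, slightly more careful than a bare citation of Theorem~1 would be.
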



\vspace{5pt}


\vspace{5pt}

Based on \cref{lem:BR_linear}, we can define a BR map in the space of threshold policies, which takes a vector of $N$ thresholds and maps into $N$ thresholds. Let $\mathcal{F}:\R^N \rightarrow \R^N$, where
\begin{equation}\nonumber
\mathcal{F}_i(\tau_i,\tau_{-i}) \Equaldef \arg \min_{\xi \in \R} \left( \frac{\lambda}{N}\sum_{i\neq j}\pi_{ij}(\tau_j,\xi) - \alpha\xi\right)^2,
\end{equation}
for all $i\in[N].$

\vspace{5pt}

\begin{remark}
The existence of a Bayesian Nash-equilibrium in threshold policies is easy to show, but whether it is unique depends on establishing a contraction property of $\mathcal{F}$, which is a topic for future work. 
\end{remark}

\subsection{Homogeneous agents using a threshold $\tau$}

The problem is simpler when we focus only on homogeneous strategy profiles. In that case, the BR to a threshold strategy profile where every agent uses threshold $\tau$ is the unique solution to the following equation
\begin{equation}\label{eq:BR}
    \mathrm{BR}(\tau) = \Big\{\xi^\star : \lambda\frac{(N-1)}{N}\pi(\xi^\star;\tau) = \alpha\xi^\star\Big\},
\end{equation}
here
\begin{equation}\label{eqn:pidef}
    \pi(\xi;\tau) \Equaldef \E \left[\Phi\left(\frac{\tau - \tilde{\sigma}W-\alpha\xi}{\sigma_Z} \right) \right],
\end{equation}
with $W\sim\mathcal{N}(0,1)$ and $\tilde{\sigma}^2=\alpha \sigma_Z^2$.

\vspace{5pt}

\begin{example}
    \Cref{fig:BR} shows the BR function of \cref{eq:BR} and corresponding Nash equilibrium (NE) thresholds for different values of noise variance $\sigma_Z^2$. Two observations from this numerical experiment is that as the noise variance increases, so do the NE thresholds $\tau^\star(\sigma_Z^2)$. Less obvious is the limit of the NE threshold as the variance of the noise goes to zero, that is, with perfect observations. In the noiseless case, \cref{fig:BR} shows that $\tau^\star = 1/2$, which implies that, in this example, $a^ \star(x)=\mathbf{1}(x\leq 0.5).$
\end{example}

\vspace{5pt}

\begin{figure}[t!]
    \centering
\includegraphics[width=0.9\columnwidth]{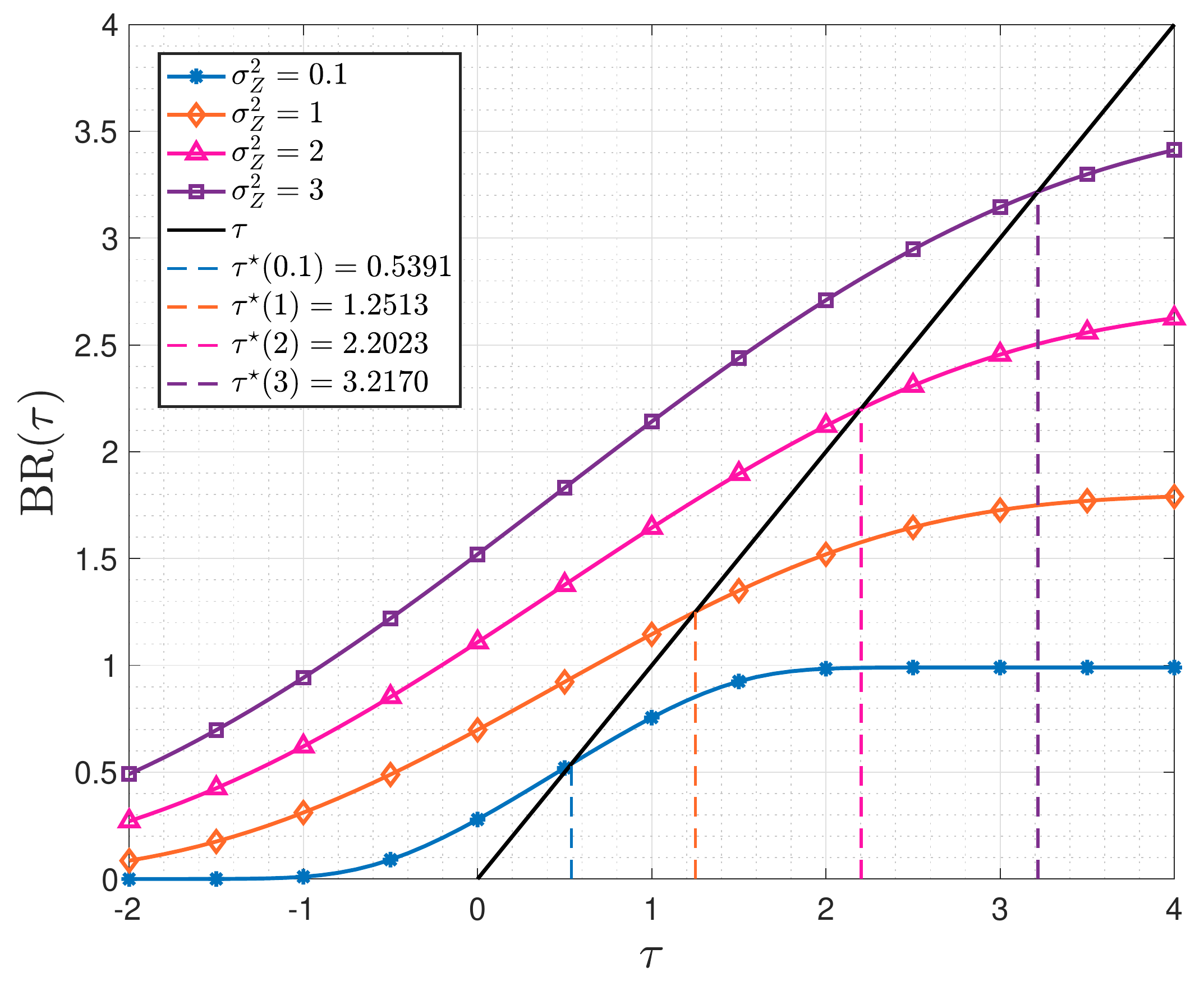}
    \caption{Best response function to a homogeneous strategy profile with threshold $\tau$. Here $N=10$, $\lambda=1$, and $\sigma_X^2=1$.}
    \label{fig:BR}
\end{figure}

First, let us discuss  the following properties of the Gaussian CDF whose proofs are omitted due to space limitations. 

\vspace{5pt}

\begin{lemma}
    Let $V\sim\mathcal{N}(0,1)$. Then, for any $c\in \R$ and any $\epsilon\in \R$, we have 
    \begin{align}\label{eqn:phiprop}
        |\EE_V[\Phi(cV+\epsilon)]-\frac{1}{2}|\leq |\epsilon|.
    \end{align}
    In particular, $\Phi(cV)=\frac{1}{2}$ for all $c\in \R$. 
\end{lemma}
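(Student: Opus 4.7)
The plan is to first handle the symmetric case $\epsilon=0$ and then reduce the general statement to a Lipschitz estimate.

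\textbf{Step 1: The case $\epsilon = 0$.} I would first establish that $\EE_V[\Phi(cV)] = \tfrac{1}{2}$ for every $c \in \R$. The key facts are (i) $V$ and $-V$ are identically distributed because $V \sim \mathcal{N}(0,1)$ is symmetric about the origin, and (ii) $\Phi$ satisfies the reflection identity $\Phi(-x) + \Phi(x) = 1$ for all $x \in \R$. Combining these,
\begin{equation*}
    \EE_V[\Phi(cV)] = \EE_V[\Phi(-cV)] = \EE_V\bigl[1 - \Phi(cV)\bigr] = 1 - \EE_V[\Phi(cV)],
\end{equation*}
which immediately yields $\EE_V[\Phi(cV)] = \tfrac{1}{2}$.

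\textbf{Step 2: Reduction to a Lipschitz bound.} Using Step 1, I would rewrite the quantity of interest as
\begin{equation*}
    \EE_V\bigl[\Phi(cV+\epsilon)\bigr] - \tfrac{1}{2} = \EE_V\bigl[\Phi(cV+\epsilon) - \Phi(cV)\bigr].
\end{equation*}
Then the fundamental theorem of calculus gives $\Phi(cV+\epsilon) - \Phi(cV) = \int_0^\epsilon \phi(cV+s)\,ds$, where $\phi$ is the standard Gaussian density. Since $\phi(x) \leq \phi(0) = 1/\sqrt{2\pi} < 1$ for every $x \in \R$, we have the pointwise bound $|\Phi(cV+\epsilon) - \Phi(cV)| \leq |\epsilon|/\sqrt{2\pi} \leq |\epsilon|$. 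Taking expectations and applying Jensen's inequality (or directly the triangle inequality for integrals) yields
\begin{equation*}
    \bigl|\EE_V[\Phi(cV+\epsilon)] - \tfrac{1}{2}\bigr| \leq \EE_V\bigl[\,|\Phi(cV+\epsilon) - \Phi(cV)|\,\bigr] \leq |\epsilon|,
\end{equation*}
which is the claimed inequality.

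There is no real obstacle here: the result is essentially a combination of Gaussian symmetry and the fact that $\Phi$ is a Lipschitz function with constant $1/\sqrt{2\pi} < 1$. The only minor subtlety is reading the statement "$\Phi(cV)=\tfrac{1}{2}$" as the expectation statement $\EE_V[\Phi(cV)] = \tfrac{1}{2}$ (since $\Phi(cV)$ is a random variable unless $c=0$), which is exactly what emerges from the $\epsilon = 0$ specialization of the main inequality together with the symmetry argument in Step 1. The bound $|\epsilon|$ in the statement is notably loose — the argument actually proves the stronger bound $|\epsilon|/\sqrt{2\pi}$ — but presumably the weaker form suffices for subsequent use in the paper.
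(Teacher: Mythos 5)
Your proof is correct and complete. Note that the paper itself supplies no proof of this lemma (it states that the proofs of these Gaussian CDF properties ``are omitted due to space limitations''), so there is no authorial argument to compare against; your write-up therefore fills a genuine gap in the exposition. Both steps are sound: the symmetry argument $\EE_V[\Phi(cV)]=\EE_V[1-\Phi(cV)]$ correctly pins down the value $\tfrac12$, and the Lipschitz bound $|\Phi(y+\epsilon)-\Phi(y)|\leq |\epsilon|\sup_t\phi(t)=|\epsilon|/\sqrt{2\pi}$ followed by the triangle inequality for expectations gives the claim (indeed a slightly stronger one). You are also right that the final sentence of the statement should be read as $\EE_V[\Phi(cV)]=\tfrac12$. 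For completeness, an alternative route worth knowing: writing $\Phi(cV+\epsilon)=\mathbb{P}(W\leq cV+\epsilon\mid V)$ for an independent standard Gaussian $W$ gives the closed form $\EE_V[\Phi(cV+\epsilon)]=\Phi\big(\epsilon/\sqrt{1+c^2}\big)$, from which the same bound (and the even sharper constant $1/\sqrt{2\pi(1+c^2)}$) follows in one line; either argument suffices for the way the lemma is used in Lemma~4 of the paper.
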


\vspace{5pt}

    Using this result, we can show the following important estimate of the fixed point of~\eqref{eqn:pidef}. 

    \vspace{5pt}
    
\begin{lemma}\label{lemma:xistarestimate}
    For $\sigma_X^2,\sigma_Z^2>0$, let $\xis=\tau_N^\star(\sigma^2_X,\sigma^2_Z)$ be the unique solution to the fixed point equation 
    \begin{align}\label{eqn:xistar}
        \lambda\frac{N-1}{N}\pi(\xis;\xis)=\alpha\xis.
    \end{align} 
    Then, 
    \begin{align}\label{eqn:xistarinequality}
        \frac{1}{2}\leq \left(\frac{\sigma^2_X}{\sigma^2_X+\sigma^2_Z}\right)\frac{\xis}{\lambda (N-1)/N}\leq\left(
        \frac{1}{2}+\frac{\sigma_Z}{\sigma^2_X}\lambda (N-1)/{ N}\right).
    \end{align} 
\end{lemma}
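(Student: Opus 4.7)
The plan is to exploit the fixed-point equation~\eqref{eqn:xistar} together with the bound~\eqref{eqn:phiprop} by evaluating $\pi(\xis;\xis)$ in the form required by that lemma. Substituting $\tau=\xi=\xis$ into~\eqref{eqn:pidef}, and using that $-W$ is also standard Gaussian to flip the sign of the noise term, I would rewrite
\[
\pi(\xis;\xis) = \EE\!\left[\Phi\!\left(\frac{\tilde{\sigma}}{\sigma_Z}W + \frac{(1-\alpha)\xis}{\sigma_Z}\right)\right],
\]
which is exactly of the form $\EE[\Phi(cW+\epsilon)]$ with $c=\tilde{\sigma}/\sigma_Z$ and $\epsilon=(1-\alpha)\xis/\sigma_Z$. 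Note that $1-\alpha=\sigma_Z^2/(\sigma_X^2+\sigma_Z^2)$, so $\epsilon=\sigma_Z\xis/(\sigma_X^2+\sigma_Z^2)$.

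Next, I would establish $\xis>0$: since $\Phi\in[0,1]$, the fixed-point relation~\eqref{eqn:xistar} gives $\alpha\xis\geq 0$, and the case $\xis=0$ is ruled out because then $\pi(0;0)=\EE[\Phi(\tilde{\sigma}W/\sigma_Z)]=\tfrac{1}{2}\neq 0$ by~\eqref{eqn:phiprop}. With $\xis>0$ we have $\epsilon>0$, so by monotonicity of $\Phi$,
\[
\pi(\xis;\xis)\;\geq\;\EE\!\left[\Phi\!\left(\tfrac{\tilde{\sigma}}{\sigma_Z}W\right)\right]=\tfrac{1}{2},
\]
and by~\eqref{eqn:phiprop}, $\pi(\xis;\xis)\leq \tfrac{1}{2}+\epsilon$. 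Dividing the fixed point relation $\alpha\xis=\lambda\tfrac{N-1}{N}\pi(\xis;\xis)$ by $\lambda(N-1)/N$ then yields the lower bound in~\eqref{eqn:xistarinequality} immediately, and the preliminary upper bound
\[
\frac{\alpha\xis}{\lambda(N-1)/N}\;\leq\; \frac{1}{2}+\frac{\sigma_Z\xis}{\sigma_X^2+\sigma_Z^2}.
\]

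To convert this preliminary upper bound into the one claimed in~\eqref{eqn:xistarinequality}, I would feed back a crude a priori bound on $\xis$: using $\pi(\xis;\xis)\leq 1$ in the fixed-point equation gives $\xis\leq \lambda\tfrac{N-1}{N}\cdot\tfrac{\sigma_X^2+\sigma_Z^2}{\sigma_X^2}$, and substituting this into $\sigma_Z\xis/(\sigma_X^2+\sigma_Z^2)$ collapses the $(\sigma_X^2+\sigma_Z^2)$ factors and delivers
\[
\frac{\sigma_Z\xis}{\sigma_X^2+\sigma_Z^2}\;\leq\;\frac{\sigma_Z}{\sigma_X^2}\cdot\lambda\tfrac{N-1}{N},
\]
which is exactly the slack term in~\eqref{eqn:xistarinequality}. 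The main (minor) obstacle is the sign manipulation and the combination of the monotonicity lower bound with the bootstrap step that replaces $\xis$ in the slack term by an explicit quantity; everything else reduces to plugging the fixed-point equation into~\eqref{eqn:phiprop}.
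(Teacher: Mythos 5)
Your proposal is correct and follows essentially the same route as the paper's proof: both derive the a priori bound $0\leq \xis\leq \frac{\lambda(N-1)/N}{\alpha}$ from $0\leq\pi(\xis;\xis)\leq 1$, obtain the lower bound from monotonicity of $\Phi$ and \eqref{eqn:phiprop} with $\epsilon=0$, and obtain the upper bound by combining \eqref{eqn:phiprop} with the a priori bound on $\xis$. The only (immaterial) difference is that you apply \eqref{eqn:phiprop} first and then substitute the a priori bound into the slack term $\epsilon$, whereas the paper substitutes the bound inside the argument of $\Phi$ before invoking \eqref{eqn:phiprop}.
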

\begin{proof}
    First, note that $0\leq \Phi(\cdot)\leq 1$ and~\eqref{eqn:pidef}, implies that $0\leq \pi(\xis;\xis)\leq 1$. Therefore, the solution to~\eqref{eqn:xistar}, satisfies, 
    \begin{align}\label{eqn:inequalityonxis}
        0\leq \xis\leq \frac{\lambda(N-1)/N}{\alpha}.
    \end{align}
    Using  $\pi(\xis;\xis)=\E \left[\Phi\left(\frac{(1-\alpha)\xis - \tilde{\sigma}W}{\sigma_Z} \right) \right]$, we get 
    \begin{align*}
        \frac{1}{2}&\stackrel{\rm{(a)}}{=}\E \left[\Phi\left(\frac{- \tilde{\sigma}W}{\sigma_Z} \right) \right]\cr 
        &\stackrel{\rm{(b)}}{\leq}
        \E \left[\Phi\left(\frac{(1-\alpha)\xis - \tilde{\sigma}W}{\sigma_Z} \right) \right]\cr 
        &\stackrel{\rm{(c)}}{\leq}
        \E \left[\Phi\left(\frac{(1-\alpha)\frac{\lambda(N-1)/N}{\alpha} - \tilde{\sigma}W}{\sigma_Z} \right) \right]\cr 
        &\stackrel{\rm{(d)}}{\leq}
        \frac{1}{2}+\frac{(1-\alpha)}{\sigma_Z}{\lambda(N-1)/N}.
    \end{align*}
    Here, (a) follows from~\eqref{eqn:phiprop} for $\epsilon=0$, (b) and (c) follow from \cref{eqn:inequalityonxis}, the fact $0<\alpha<1$, the monotonicity of expectation, and the monotonicity of $\Phi(\cdot)$, and (d) follows from~\eqref{eqn:phiprop}. Using the above inequality and the fact that $\xis$ satisfies~\eqref{eqn:xistar}, we have
    \begin{align*}
        \frac{1}{2}\leq \alpha\frac{\xis}{\lambda(N-1)/N}\leq \frac{1}{2}+\frac{(1-\alpha)}{\sigma_Z}{\lambda(N-1)/N}.
    \end{align*}
    Finally, noting $\alpha=\frac{\sigma^2_X}{\sigma^2_X+\sigma^2_Z}$  concludes the proof. 
\end{proof}

\vspace{5pt}
The following result follows immediately from~\cref{lemma:xistarestimate}. 
\begin{theorem}[Diffuse Gaussian priors]\label{thm:diffuse}
    Consider a Global Game with the linear benefit function of \cref{eq:linear}.  Then, for all $\sigma^2_Z>0$, 
    \begin{equation}\nonumber
        \lim_{\sigma^2_X\to\infty}\tau_N^\star = \frac{\lambda}{2}\cdot\Big(1-\frac{1}{N}\Big).
    \end{equation} 
\end{theorem}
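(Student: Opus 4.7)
The plan is to invoke Lemma~\ref{lemma:xistarestimate} and conclude by a squeeze argument. Since the theorem is stated only for the regime $\sigma_X^2 \to \infty$ while $\sigma_Z^2 > 0$ is held fixed, all the hard work of locating $\xis$ in terms of the ratio $\sigma_X^2/(\sigma_X^2+\sigma_Z^2)$ has already been done in the preceding lemma.

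First, I would rewrite the two-sided inequality~\eqref{eqn:xistarinequality} by isolating $\xis$. Multiplying through by $\lambda(N-1)/N$ and dividing by $\alpha = \sigma_X^2/(\sigma_X^2+\sigma_Z^2)$ yields
\begin{equation}\nonumber
    \frac{\lambda}{2}\cdot\frac{N-1}{N}\cdot\frac{\sigma_X^2+\sigma_Z^2}{\sigma_X^2}
    \;\leq\; \xis \;\leq\;
    \left(\frac{\lambda}{2}\cdot\frac{N-1}{N} + \frac{\sigma_Z}{\sigma_X^2}\,\lambda^2\left(\frac{N-1}{N}\right)^{\!2}\right)\cdot\frac{\sigma_X^2+\sigma_Z^2}{\sigma_X^2}.
\end{equation}
Both the lower and the upper bound are explicit functions of $\sigma_X^2$ with $\sigma_Z^2$, $N$, and $\lambda$ fixed.

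Second, I would take the limit $\sigma_X^2 \to \infty$ in each bound. The factor $(\sigma_X^2+\sigma_Z^2)/\sigma_X^2 \to 1$, and the term $\sigma_Z/\sigma_X^2 \to 0$. Therefore the lower bound tends to $(\lambda/2)(1-1/N)$ and the upper bound tends to the same value. By the squeeze theorem, $\xis$ converges to $(\lambda/2)(1-1/N)$, which is the claimed limit.

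The only subtlety worth flagging is ensuring that $\xis$ is well-defined (i.e., a unique fixed point of~\eqref{eqn:xistar}) for every $\sigma_X^2 > 0$ considered along the sequence; this follows from the monotonicity argument of Theorem~\ref{thm:threshold} applied in the homogeneous setting, which guarantees a single crossing point of the best-response equation, so $\xis$ is unambiguously defined along the entire limit. With that observation the argument reduces to the clean squeeze outlined above and no additional estimates are needed; there is no main obstacle beyond bookkeeping the constants in~\eqref{eqn:xistarinequality}.
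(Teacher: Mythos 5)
Your proposal is correct and matches the paper's argument exactly: the paper derives the theorem "immediately" from Lemma~\ref{lemma:xistarestimate} by letting $\sigma_X^2\to\infty$ in the two-sided bound~\eqref{eqn:xistarinequality}, which is precisely your squeeze argument (and your algebraic rearrangement of the bounds is right). Your added remark on the well-definedness of $\xis$ is a reasonable bit of extra care that the paper leaves implicit in the statement of the lemma.
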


\vspace{5pt}

\begin{remark}
In the Economics literature on global games \cite{Carlsson:1993, Morris:2003,Dahleh:2016}, it is customary to assume a diffuse prior distribution on $X$. However, from the Engineering perspective, the assumption of a diffuse Gaussian distribution on $X$ with $\sigma_X^2\rightarrow \infty$ leads to effectively having parallel Gaussian communication channels of  infinite capacity\footnote{The Shannon capacity of a Gaussian channel is given by 
\begin{equation}\nonumber
C=\frac{1}{2}\log_2\Big(1+\frac{\sigma_X^2}{\sigma_Z^2}\Big) \ \text{bits/channel use}.
\end{equation}}. Therefore, from an information theoretic perspective, the effect of the channel noise becomes negligible, leading to perfect estimates of the input $X$ given $Y_i$ in the mean-squared error sense \cite{Guo:2005}.
\end{remark}




\begin{remark}
A special case of our result is when $N\rightarrow \infty$  and $\lambda=1$, we get $\tau^\star_{\infty}(1)\to 1/2$, which is established in Morris and Shin \cite{Morris:2003}. 
\end{remark}

\vspace{5pt}

The significance of \cref{thm:diffuse} is that it leads to an Bayesian Nash equilibrium threshold policy corresponding to the case where the signals are observed through perfect channels. Remarkably, for a linear benefit function such policy has a closed form. We refer to this as the \textit{oracle} policy, and is defined as:
\begin{equation}\label{eq:oracle}
\gamma_{\mathrm{oracle}}^\star(x)  \Equaldef \mathbf{1}\big( x \leq \tau^\star_N(\sigma_Z^2)\big).
\end{equation}



\subsection{Certainty equivalent policies}

The knowledge of the optimal policy for $\sigma_X^2\rightarrow \infty$ motivates the definition of a \textit{certainty equivalent} (CE) policy \cite{Bertsekas:2012} when $\sigma^2_X < \infty$ in which the agents first form a estimate of the fundamental using the MMSE estimator of \cref{eq:mmse}, followed by using the oracle policy in \cref{eq:oracle}: 
\begin{equation}\nonumber
\gamma_{\mathrm{ce}}(y)\Equaldef\mathbf{1}\big( \hat{x}_{\mathrm{mmse}}(y) \leq \tau^\star_N(\sigma_Z^2)\big).
\end{equation}
Thus,
\begin{equation}\nonumber
\gamma_{\mathrm{ce}}(y)=\mathbf{1}\Bigg( y \leq \Big(1+\frac{\sigma^2_Z}{\sigma^2_X}\Big)\cdot\frac{\lambda}{2}\cdot\Big(1-\frac{1}{N}\Big)\Bigg).
\end{equation}


\section{A fundamental limit on coordination}

\begin{figure}[t!]
    \centering
\includegraphics[width=0.9\columnwidth]{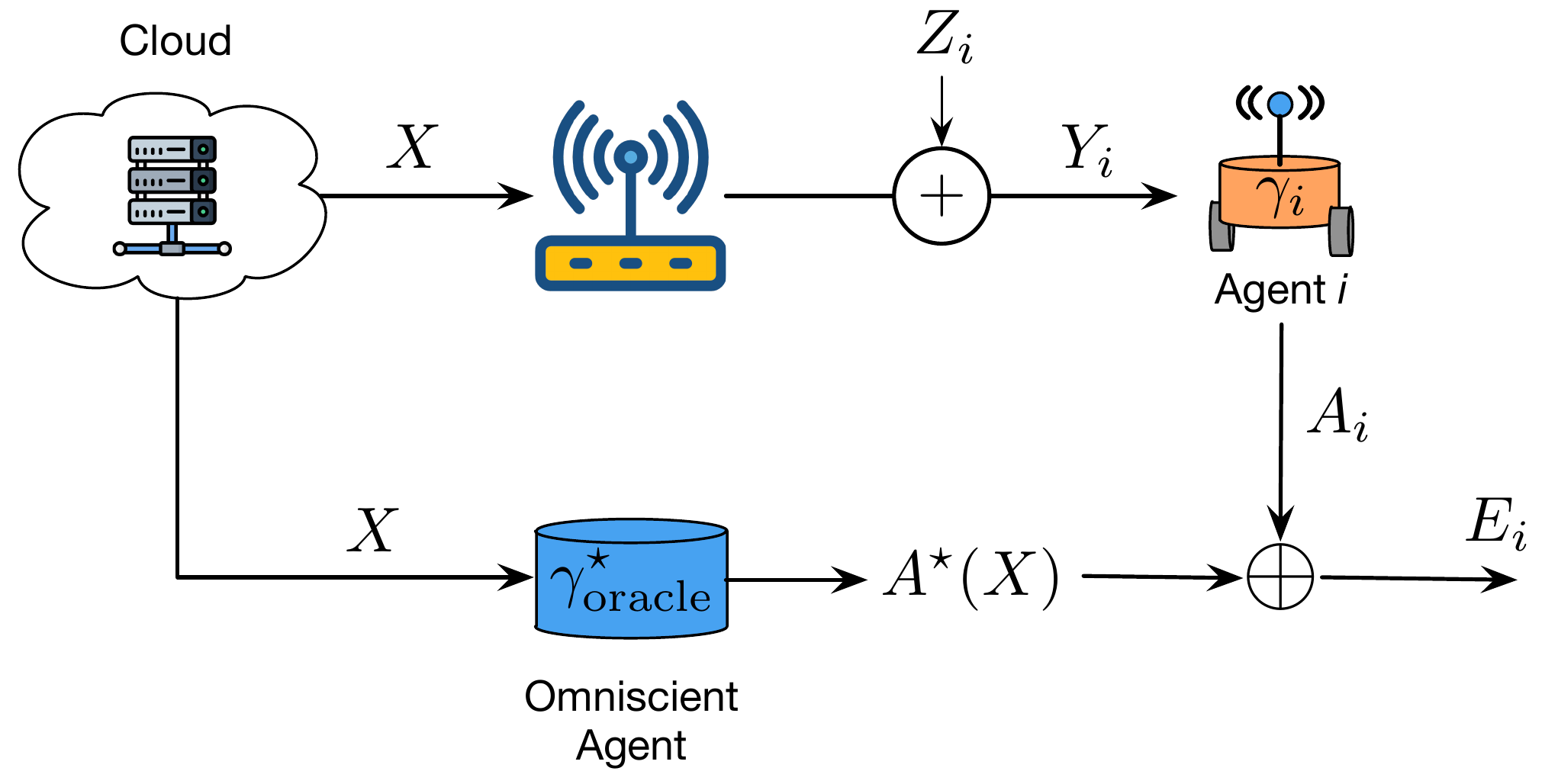}
    \caption{Diagram showing how to compute the coordination error event between a generic agent and an omniscient agent with access to perfect information playing a stochastic coordination game.}
    \label{fig:error}
\end{figure}

We now obtain a universal upper bound on the efficiency of \textit{any} policy, regardless of their structure. Our result is based on \textit{Fano's inequality} \cite{Cover:2006}. Fano's inequality provides a bound on the probability of estimation error of the estimate of a discrete random variable on the basis of side information. 

\vspace{5pt}

\begin{theorem}[Upper bound on coordination efficiency]
For a global game with a linear benefit functions, the coordination efficiency of any homogeneous strategy profile satisfies the following bound
\begin{equation}\nonumber
\varrho \leq 1 - h^{-1}\Big(H\big(A^\star(X) \mid Y_i\big)\Big),
\end{equation}
where $H(\cdot\mid\cdot)$ is the conditional entropy function\footnote{The entropy of a random variable $X\sim f(x)$ is defined as
\begin{equation}\nonumber
H(X) \Equaldef  -\EE\Big[\log_2\big(f(X)\big)\Big].
\end{equation}}, and $h^{-1}(\cdot)$ is the inverse of the binary entropy function over the interval $[0,1/2]$.
\end{theorem}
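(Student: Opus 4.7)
The strategy is to view a homogeneous strategy $\gamma$ as a binary estimator of the optimal action $A^\star(X):=a^\star(X)\in\{0,1\}$ based on the side information $Y_i$, and then invoke Fano's inequality. I first exploit homogeneity: for the profile $\bgamma=(\gamma,\ldots,\gamma)$, the pairs $(X,Y_i)$ share the same joint distribution across $i\in[N]$ and $a^\star(\cdot)$ is a common function, so every term in the average of \cref{def:coordeffic} is identical and $\varrho(\bgamma)=\mathbb{P}(\gamma(Y_i)=A^\star(X))$ for any fixed $i$. Setting $P_e:=1-\varrho$, this is precisely the error probability of the estimator $\hat{A}:=\gamma(Y_i)$ of the binary target $A^\star(X)$.

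Next, since $A^\star(X)$ takes values in the binary set $\MA=\{0,1\}$ by \cref{lemma:benefitiscoordination}, Fano's inequality in its sharp binary form gives $H(A^\star(X)\mid \hat{A})\leq h(P_e)$, with the usual $P_e\log(|\MA|-1)$ term vanishing because $|\MA|=2$. Because $\hat{A}$ is a deterministic function of $Y_i$, the Markov chain $A^\star(X)-Y_i-\hat{A}$ holds, and the data-processing inequality yields $H(A^\star(X)\mid Y_i)\leq H(A^\star(X)\mid \hat{A})$. Chaining the two bounds produces
\begin{equation}\nonumber
H(A^\star(X)\mid Y_i)\leq h(P_e)=h(1-\varrho).
\end{equation}

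The remaining step is to invert the binary entropy. If $\varrho\geq 1/2$, then $P_e\in[0,1/2]$, where $h$ is a strictly increasing bijection onto $[0,1]$; applying $h^{-1}$ to both sides preserves the inequality and gives $h^{-1}(H(A^\star(X)\mid Y_i))\leq 1-\varrho$, which is exactly the claim. If instead $\varrho<1/2$, the bound is automatic: $h^{-1}$ is valued in $[0,1/2]$, so $1-h^{-1}(\cdot)\geq 1/2>\varrho$.

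The one genuinely subtle point — and really the only potential obstacle — is ensuring that $A^\star(X)$ is a well-defined $\{0,1\}$-valued random variable. \Cref{lemma:benefitiscoordination} shows that for $x\in(b(0),b(N-1))$ both $a^\star(x)=0$ and $a^\star(x)=1$ satisfy the coordinating condition, so a measurable selection of $a^\star(\cdot)$ must be fixed before invoking Fano; once that is done, the proof is a clean, direct application of the binary Fano inequality combined with data processing, and no game-theoretic structure beyond the binary action set is needed.
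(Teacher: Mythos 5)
Your proof is correct and follows essentially the same route as the paper: identify $\gamma(Y_i)$ as a binary estimator of the oracle action $A^\star(X)$, apply Fano's inequality with the $\log_2(|\MA|-1)$ term vanishing for binary actions, and invert the binary entropy on $[0,1/2]$. You are in fact slightly more careful than the paper on two points it glosses over --- the case $\varrho<1/2$ (the paper simply asserts the error probability is at most $1/2$) and the need to fix a measurable selection of $a^\star(\cdot)$ on $(b(0),b(N-1))$ --- but these refinements do not change the argument.
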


\vspace{5pt}

\begin{proof}
Let $A^\star(X) \in \{0,1\}$ such that
\begin{equation}\nonumber
A_i^\star(X) = \mathbf{1}\big(X\leq \tau^\star_N(\sigma_Z^2)\big)
\end{equation}
and let $\hat{A}_i(Y)\in \{0,1\}$ denote any estimate of $A^\star(X)$, on the basis of $Y_i$. Then, notice that the following Markov relation is satisfied
\begin{equation}\nonumber
A^\star(X) \leftrightarrow X \leftrightarrow Y \leftrightarrow \hat{A}_i(Y_i).
\end{equation}

Considering the block diagram in \cref{fig:error}, define the error random variable 
\begin{equation}\nonumber
E_i\Equaldef \mathbf{1}\big(\hat{A}_i(Y)\neq A^\star(X)\big),
\end{equation}
and notice that the probability of making an error when  estimating $A^\star(X)$ is at most $1/2$. Fano's inequality \cite{Cover:2006} is a bound on the conditional entropy of the optimal decision computed using the oracle policy given the signal $Y_i$ available to the $i$-th agent
\begin{equation}\nonumber
H\big(A^\star(X) \mid Y_i\big) \leq h(E_i) + \mathbb{P}(E_i=1)\log_2(|\mathcal{A}|-1),
\end{equation}
where $|\mathcal{A}|$ is the cardinality of the decision variable $A_i$. Since out decision variables are binary, we have
\begin{equation}\label{eq:fano}
H\big(A^\star(X) \mid Y_i\big) \leq h(E_i).
\end{equation}

Assuming that we can compute the LHS of \cref{eq:fano}, we obtain a bound on $\mathbb{P}(E_i=1)$, by finding the inverse of the \textit{binary entropy function}\footnote{The binary entropy function is defined as
\begin{equation}\nonumber
h(p) \Equaldef -p\log_2 p - (1-p)\log_2 (1-p).
\end{equation}} within the interval $[0,0.5]$. Finally, notice that for a homogeneous strategy profile the coordination efficiency is
\begin{equation}\nonumber
\varrho = 1 - \mathbb{P}(E_i=1) \leq 1-h^{-1}\Big(H\big(A^\star(X) \mid Y_i\big)\Big).
\end{equation}
\end{proof}

\begin{figure*}[t!]
    \centering
\includegraphics[width=\textwidth]{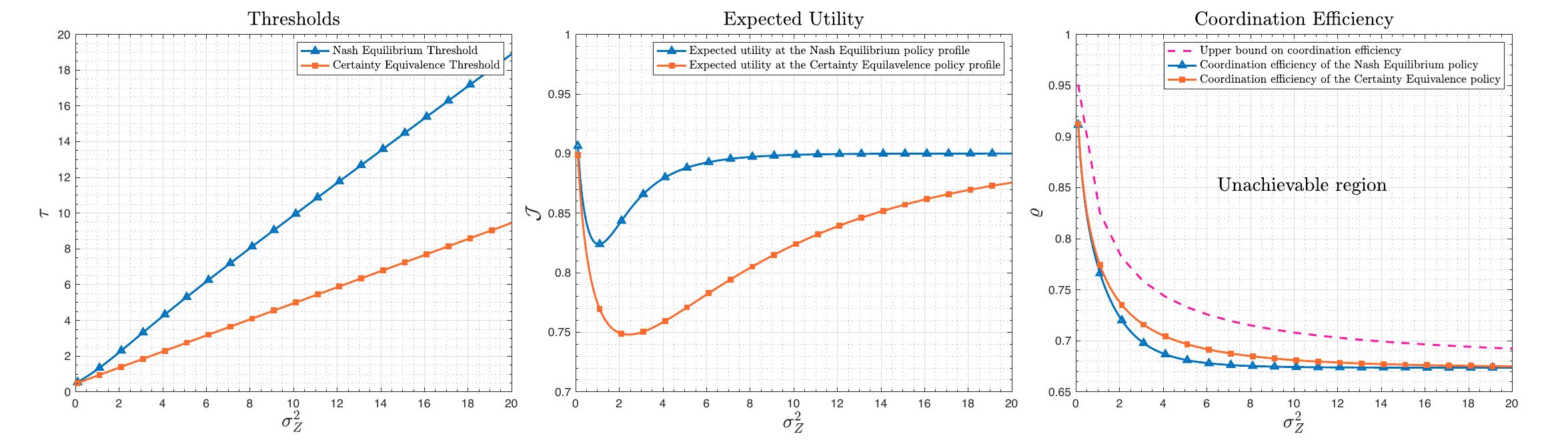}
    \caption{Numerical results for a global game with linear benefit function with $\lambda=1$ for a non-diffuse prior with variance $\sigma_X^2=1$, and a countably infinite number of agents. The NE and CE thresholds as a function of the noise variance (left); The expected utility as a function of the noise variance (center); The coordination efficiency and its universal lower bound (right).}
    \label{fig:efficiency}
\end{figure*}

\subsection{Computing the bound on coordination efficiency}

Using properties of the entropy function, we obtain:
\begin{equation}\nonumber
H\big(A^\star(X) \mid Y_i\big) = H\big(A^\star(X)\big)- H(Y_i) + H\big(Y_i \mid A^\star(X)\big).
\end{equation}
We proceed to compute each of these three terms: the first is the entropy of the optimal decision variable as computed by the oracle:
\begin{equation}\nonumber
h\big(A^\star(X)\big) = h\Bigg(\mathbb{P}\bigg(X\leq \frac{\lambda}{2}\Big(1-\frac{1}{N}\Big) \bigg) \Bigg),
\end{equation}
where $h(\cdot)$ denotes the binary entropy function. 

The second term is the differential entropy of the signal $Y_i$, which is a Gaussian random variable with variance $\sigma^2_X + \sigma^2_Z$. Therefore,
\begin{equation}\nonumber
H(Y_i) = \frac{1}{2}\log_2\big(2\pi e(\sigma^2_X + \sigma^2_Z) \big).
\end{equation}

The third term is more challenging must be computed numerically.
\begin{equation}\nonumber
H(Y_i \mid A^\star(X)=1) =  H \bigg(Y_i \mid X\leq \frac{\lambda}{2}\Big(1-\frac{1}{N}\Big)\bigg). 
\end{equation}
To evaluate this entropy, we must use the conditional probability density function 
\begin{equation}\nonumber
f_{Y_i\mid X\leq \frac{\lambda}{2}(1-\frac{1}{N})}(y_i) = \frac{\int_{-\infty}^{\frac{\lambda}{2}(1-\frac{1}{N})}f_Z(y_i-x)f_X(x)dx}{\int_{-\infty}^{\frac{\lambda}{2}(1-\frac{1}{N})}f_X(x)dx}.
\end{equation}
Similarly,
\begin{equation}\nonumber
H(Y_i \mid A^\star(X)=0) =  H \bigg(Y_i \mid X> \frac{\lambda}{2}\Big(1-\frac{1}{N}\Big)\bigg). 
\end{equation}
To evaluate this entropy, we must use the conditional probability density function 
\begin{equation}\nonumber
f_{Y_i\mid X> \frac{\lambda}{2}(1-\frac{1}{N})}(y_i) = \frac{\int^{\infty}_{\frac{\lambda}{2}(1-\frac{1}{N})}f_Z(y-x)f_X(x)dx}{\int^{\infty}_{\frac{\lambda}{2}(1-\frac{1}{N})}f_X(x)dx}.
\end{equation}
Finally, we can compute:
\begin{multline}\nonumber
H(Y_i \mid A^\star(X)) = H(Y_i \mid A^\star(X)=0)\mathbb{P}\big(A^\star(X)=0\big) \\ + H(Y_i \mid A^\star(X)=1)\mathbb{P}\big(A^\star(X)=1\big),
\end{multline}
where
\begin{multline}\nonumber
H(Y_i \mid A^\star(X)=0) = -\int_{\mathbb{R}} f_{Y_i\mid X> \frac{\lambda}{2}(1-\frac{1}{N})}(y_i) \\ \times \log_2\Big(f_{Y_i\mid X> \frac{\lambda}{2}\big(1-\frac{1}{N})}(y_i)\Big) dy_i
\end{multline}
and
\begin{multline}\nonumber
H(Y_i \mid A^\star(X)=1) = -\int_{\mathbb{R}} f_{Y_i\mid X\leq \frac{\lambda}{2}(1-\frac{1}{N})}(y_i) \\ \times \log_2\Big(f_{Y_i\mid X \leq \frac{\lambda}{2}(1-\frac{1}{N})}(y_i)\Big) dy_i.
\end{multline}

Lastly, the computation of the inverse of the binary entropy function can be efficiently performed numerically.

\section{Numerical results}

The characterization we have provided thus far assumes that a the agents choose their actions according to a policy that ideally tracks the behavior of an omniscient agent that has access to perfect information about the state. Since the agents receive noisy signals about the state, they are not able to perfectly coordinate with the omniscient agent using a threshold policy indexed by $\tau^\star$. 

Assuming that the agents use a generic homogeneous threshold policy indexed by $\tau$, the probability of miscoordination is given $X=x$ is given by:
\begin{multline}
\mathbb{P}(E_i=1\mid X=x) = \bigg(1-\Phi\Big( \frac{\tau-x}{\sigma_Z}\Big)\bigg)\mathbf{1}(x\leq \tau^\star) \\ + \Phi\Big( \frac{\tau-x}{\sigma_Z}\Big)\mathbf{1}(x> \tau^\star).
\end{multline}
Therefore,
\begin{equation}
\varrho(\boldsymbol{\gamma}) = 1 -\int_{\mathbb{R}}\mathbb{P}(E_i=1\mid X=x)f(x)dx.
\end{equation}

Assume a global game with linear benefit function, and a number of agents $N\rightarrow \infty$. The optimal threshold used by the omniscient agent is $\tau^\star = \lambda/2$. Form the agent standpoint, we consider two strategies: 1. computing the NE threshold for the global game, using the prior information $\sigma_X^2$ and $\sigma_Z^2$, and the parameter $\lambda$; 2. estimate the state variable $X$ using a MMSE estimator and using the certainty equivalent policy. 

\Cref{fig:efficiency} (left) shows the thresholds corresponding to these two types of coordination policies for a system $\sigma_X^2=1$ and $\lambda=1$ as a function of the noise variance $\sigma_Z^2$. We can clearly see how different these two policies are. Moreover, there is also a larger computational cost of solving for the NE in the first strategy, whereas the CE strategy can be obtained in closed form in this case. \Cref{fig:efficiency} (center) shows the expected utility of these two strategies. There is a substantial gap between the utilities of an agent using the NE versus CE. This is also clear, because CE in stochastic control and optimization is a suboptimal strategy, in general. More surprisingly is the fact that CE yields a better coordination efficiency, as shown in \cref{fig:efficiency} (right). 

Finally, \cref{fig:efficiency} (right) also shows the information theoretic upper bound on coordination efficiency for any homogeneous policy profile (not just threshold policies). The significance of this figure is that it establishes that certain coordination efficiencies cannot be achieved by any policy for a given level of noise in the communication channel between the gateway and the robotic agents, in a practical application. Therefore, when properly planning for a distributed implementation of a collective task performed by strategic self-interested agents, the system designer needs to communicate at a certain signal to noise ratio, which is not determined by the bit error rate at the receiver, but instead by the level of collective coordination it is interested in achieving.

\ \

\section{Conclusions and Future work}


We defined the class of homogeneous coordination games which encompass the popular class of global games. Then, we proposed a Bayesian metric of coordination based on the probabilities that the agents will take the ``right'' action by aligning their decisions with the ones from an omniscient agent with access to the perfect state of the system. We show that this metric of coordination efficiency can be bounded using information theoretic inequalities, establishing regimes in which certain levels of coordination are impossible to achieve. To the best of our knowledge, this is the first time  such methods are used in the context of global games.

Future work on this topic will include design of new learning algorithms (for threshold policies) in the presence of local data at the agents, the presence of partially connected influence graphs on the agent's benefit functions, and the characterization of better upper bounds on coordination efficiency that would take into account the structure of the policy (e.g. threshold).


\bibliographystyle{IEEEtran}

\bibliography{IEEEabrv,MTGG}

\begin{thebibliography}{10}
\providecommand{\url}[1]{#1}
\csname url@samestyle\endcsname
\providecommand{\newblock}{\relax}
\providecommand{\bibinfo}[2]{#2}
\providecommand{\BIBentrySTDinterwordspacing}{\spaceskip=0pt\relax}
\providecommand{\BIBentryALTinterwordstretchfactor}{4}
\providecommand{\BIBentryALTinterwordspacing}{\spaceskip=\fontdimen2\font plus
\BIBentryALTinterwordstretchfactor\fontdimen3\font minus
  \fontdimen4\font\relax}
\providecommand{\BIBforeignlanguage}[2]{{%
\expandafter\ifx\csname l@#1\endcsname\relax
\typeout{** WARNING: IEEEtran.bst: No hyphenation pattern has been}%
\typeout{** loaded for the language `#1'. Using the pattern for}%
\typeout{** the default language instead.}%
\else
\language=\csname l@#1\endcsname
\fi
#2}}
\providecommand{\BIBdecl}{\relax}
\BIBdecl

\bibitem{Arditti:2021}
L.~Arditti, G.~Como, F.~Fagnani, and M.~Vanelli, ``Equilibria and learning
  dynamics in mixed network coordination/anti-coordination games,'' in
  \emph{2021 60th IEEE Conference on Decision and Control (CDC)}, 2021, pp.
  4982--4987.

\bibitem{Ramazi:2022}
P.~Ramazi and M.~H. Roohi, ``Characterizing oscillations in heterogeneous
  populations of coordinators and anticoordinators,'' in \emph{2022 IEEE 61st
  Conference on Decision and Control (CDC)}, 2022, pp. 4615--4620.

\bibitem{Paarporn:2021}
K.~Paarporn, B.~Canty, P.~N. Brown, M.~Alizadeh, and J.~R. Marden, ``The impact
  of complex and informed adversarial behavior in graphical coordination
  games,'' \emph{IEEE Transactions on Control of Network Systems}, vol.~8,
  no.~1, pp. 200--211, 2021.

\bibitem{Paarporn:2021b}
K.~Paarporn, M.~Alizadeh, and J.~R. Marden, ``A risk-security tradeoff in
  graphical coordination games,'' \emph{IEEE Transactions on Automatic
  Control}, vol.~66, no.~5, pp. 1973--1985, 2021.

\bibitem{Soham:2022}
S.~Das and C.~Eksin, ``Approximate submodularity of maximizing anticoordination
  in network games,'' in \emph{2022 IEEE 61st Conference on Decision and
  Control (CDC)}, 2022, pp. 3151--3157.

\bibitem{Morris:2003}
S.~Morris and H.~S. Shin, \emph{Global Games: Theory and Applications}, ser.
  Econometric Society Monographs.\hskip 1em plus 0.5em minus 0.4em\relax
  Cambridge University Press, 2003, vol.~1, pp. 56--114.

\bibitem{Carlsson:1993}
H.~Carlsson and E.~Van~Damme, ``Global games and equilibrium selection,''
  \emph{Econometrica: Journal of the Econometric Society}, pp. 989--1018, 1993.

\bibitem{Dahleh:2016}
M.~A. Dahleh, A.~Tahbaz-Salehi, J.~N. Tsitsiklis, and S.~I. Zoumpoulis,
  ``Coordination with local information,'' \emph{Operations Research}, vol.~64,
  no.~3, pp. 622--637, 2016.

\bibitem{touri2014global}
B.~Touri and J.~Shamma, ``Global games with noisy sharing of information,'' in
  \emph{53rd IEEE Conference on Decision and Control}.\hskip 1em plus 0.5em
  minus 0.4em\relax IEEE, 2014, pp. 4473--4478.

\bibitem{Mahdavifar:2017}
H.~Mahdavifar, A.~Beirami, B.~Touri, and J.~S. Shamma, ``Global games with
  noisy information sharing,'' \emph{IEEE Transactions on Signal and
  Information Processing over Networks}, vol.~4, no.~3, pp. 497--509, 2017.

\bibitem{Leister:2022}
C.~M. Leister, Y.~Zenou, and J.~Zhou, ``Social connectedness and local
  contagion,'' \emph{The Review of Economic Studies}, vol.~89, no.~1, pp.
  372--410, 2022.

\bibitem{Vasconcelos:2022}
M.~M. Vasconcelos, ``Bio-inspired multi-agent coordination games with {P}oisson
  observations,'' \emph{{IFAC-PapersOnLine}}, vol.~55, no.~13, pp. 180--185,
  2022.

\bibitem{Wei:2022}
Y.~Wei and M.~M. Vasconcelos, ``Strategic multi-task coordination over regular
  networks of robots with limited computation and communication capabilities,''
  \emph{arXiv preprint arXiv:2212.10968}, 2022.

\bibitem{Hoffmann:2019}
E.~J. Hoffmann and T.~Sabarwal, ``Global games with strategic complements and
  substitutes,'' \emph{Games and Economic Behavior}, vol. 118, pp. 72--93,
  2019.

\bibitem{Kanakia:2016}
A.~Kanakia, B.~Touri, and N.~Correll, ``Modeling multi-robot task allocation
  with limited information as global game,'' \emph{Swarm Intelligence},
  vol.~10, no.~2, pp. 147--160, 2016.

\bibitem{Berman:2009}
S.~Berman, A.~Halasz, M.~A. Hsieh, and V.~Kumar, ``Optimized stochastic
  policies for task allocation in swarms of robots,'' \emph{IEEE Transactions
  on Robotics}, vol.~25, no.~4, pp. 927--937, 2009.

\bibitem{hajek}
B.~Hajek, \emph{Random Processes for Engineers}.\hskip 1em plus 0.5em minus
  0.4em\relax Cambridge University Press, 2015.

\bibitem{Guo:2005}
D.~Guo, S.~Shamai, and S.~Verdu, ``Mutual information and minimum mean-square
  error in {G}aussian channels,'' \emph{IEEE Transactions on Information
  Theory}, vol.~51, no.~4, pp. 1261--1282, 2005.

\bibitem{Bertsekas:2012}
D.~Bertsekas, \emph{Dynamic programming and optimal control: Volume {I}}.\hskip
  1em plus 0.5em minus 0.4em\relax Athena scientific, 2012, vol.~1.

\bibitem{Cover:2006}
T.~M. Cover and J.~A. Thomas, \emph{Elements of Information Theory}, 2nd~ed.,
  ser. Wiley Series in Telecommunications and Signal Processing.\hskip 1em plus
  0.5em minus 0.4em\relax Wiley, 2006.

\end{thebibliography}

\end{document}